\newtheorem{theorem}{Theorem}
\newtheorem{prop}{Proposition}
\newtheorem{assumption}{Assumption}
\newtheorem{problem}{Problem}
\title{\LARGE \bf
The Proportional Integral Notch and Coleman Blade Effective Wind Speed Estimators and Their Similarities
}
\author{Yichao Liu$^{1}$, Atindriyo K. Pamososuryo$^{1}$, Sebastiaan P. Mulders$^{1}$,\\ Riccardo M.G. Ferrari$^{1}$ and Jan-Willem van Wingerden$^{1}$
\thanks{This research was conducted in cooperation with Vestas Wind Systems A/S. It was also partially supported by the European Union via a Marie Sklodowska-Curie Action (Project EDOWE, grant 835901).
$^{1}$Delft University of Technology, Delft Center for Systems and Control, Mekelweg 2, 2628 CD Delft, The Netherlands.
        {\tt\small \{Y.Liu-17, A.K.Pamososuryo, S.P.Mulders, R.Ferrari, J.W.vanWingerden\}@tudelft.nl}.}}
\begin{document}

\maketitle
\thispagestyle{empty}
\pagestyle{empty}

\begin{abstract}
\noindent The estimation of the rotor effective wind speed is used in modern wind turbines to provide advanced power and load control capabilities. However, with the ever increasing rotor sizes, the wind field over the rotor surface shows a higher degree of spatial variation. A single effective wind speed estimation therefore limits the attainable levels of load mitigation, and the estimation of the Blade Effective Wind Speed (BEWS) might present opportunities for improved load control.
This letter introduces two novel BEWS estimator approaches: A Proportional Integral Notch (PIN) estimator based on individual blade load measurements, and a Coleman estimator targeting the estimation in the non-rotating frame. Given the seeming disparities between these two estimators, the objective of this letter is to analyze the similarities between the approaches. 
It is shown that the PIN estimator, which is equivalent to the diagonal form of the Coleman estimator, is a simple but effective method to estimate the BEWS. 
The Coleman estimator, which takes the coupling effects between individual blades into account, shows a more well behaved transient response than the PIN estimator.

\end{abstract}

\section{Introduction}
\noindent Over the past decade, wind energy has grown exponentially in the global energy mix, benefiting from scientific advancement, reduced costs and governments' subsidy schemes. 
The Global Wind Energy Council~\cite{GWEC_2021} reported that more than 90\,GW of new wind power was deployed in 2020, which exhibits a global growth of 53\,\% compared to 2019. 
This growth is partially driven by the fact that the physical dimensions of wind turbines dramatically increased, which resulted in a rising demand for optimization of wind turbine control systems.

For larger turbines, the wind inflow conditions over the rotor area demonstrate a high degree of spatial variability. As a result, the need for more accurate and granular wind speed information is becoming ever more prominent for the design of effective control algorithms. The wind speed measurement from the conventional anemometer is generally omitted in the control system, as it is a single point-wise measurement disturbed by rotor induction~\cite{Soltani_2013}. 
Therefore, a wind sensing technique that leverages the wind turbine rotor as a generalized anemometer, to provide an estimate of the effective wind inflow conditions, is considered as a viable solution to address this challenge.

Wind sensing techniques are based on the idea that wind state changes are reflected by prior turbine knowledge, real-time measurements and control signals. 
For instance, the widely-used Immersion and Invariance (I\&I) estimator~\cite{Ortega_2013} exploits prior knowledge on the (aerodynamic) turbine properties, combined with an angular speed measurement and the applied generator torque control signal to estimate the Rotor Effective Wind Speed (REWS). 
Liu \textit{et al}.~\cite{Liu_LCSS2021} revisited the I\&I estimator, added an integral term, and proved that the technique shows a high degree of similarity to the popular torque balance estimation method~\cite{Soltani_2013,Bossanyi_2000,Ostergaard2007}.


The previously mentioned methods still return a single point-wise estimate of the effective wind speed across the rotor disc. The increasingly common usage of load sensors on wind turbines however, forms an opportunity for more advanced wind speed estimation solutions.  Bottasso \textit{et al}.~\cite{Bottasso_2018} developed a load-sensing approach to estimate the REWS, where several wind characteristics are estimated (\emph{e.g.,} wind shear and yaw misalignment). Liu \textit{et al}.~\cite{Liu2021} proposed a Subspace Predictive Repetitive Estimator (SPRE) to identify the periodic wind flow on an individual blade, while also providing approximations on wind shear and wake impingement based on blade load measurements. 
Although the algorithm has proven to be very effective, the SPRE approach introduces significant delays in the blade effective wind speed (BEWS) estimation, thus making it less suitable for closed-loop control.

In this letter, two novel load-sensing approaches to estimate the periodic wind flow on an individual blade are proposed. 
The first method is called Proportional Integral Notch (PIN) estimator, in which an azimuth-dependent cone coefficient is defined to reflect the one-to-one relation between the wind speeds and the blade loads. It is also assumed that the periodic wind speed on an individual blade is a superposition of REWS and zero-mean BEWS acting on the rotating frequencies. The proportional-integral component of the estimator is used to estimate the REWS component at 0P, while the gain-scheduled notch is added to identify the zero-mean BEWS component at higher frequencies.
The second approach employs the Coleman transformation to translate blade load signals from the rotating frame into the fixed frame. This transformation is often used in individual pitch control (IPC) implementations for blade fatigue load reductions~\cite{Bossanyi2003,Mulders_2019}. The Coleman estimation technique shows resemblance with the PIN estimator, making it compelling to understand the similarities of both schemes in terms of their structure and performance. To this end, as a core contribution of this letter, a frequency-domain analysis has been performed to understand the similarities under varying wind conditions.

This study provides the following contributions: 
\begin{enumerate}
\item Proposing a novel PIN estimator to estimate the BEWS for wind turbines.
\item Accounting for the coupled blade dynamics by developing a new Coleman estimator. 
\item Showcasing the similarities between the PIN and the Coleman estimators via theoretical analyses and numerical simulations.
\end{enumerate}
The remainder of this letter is organized as follows:
Section~\ref{sec:2} introduces the wind turbine model considered in this research and the azimuth-dependent cone coefficient used by the proposed wind speed estimators. Section~\ref{sec:3} theoretically formalizes the estimation techniques and demonstrates the similarities of both schemes.
In Section \ref{sec:4}, high-fidelity wind turbine simulations are carried out to demonstrate the estimators' performance and to verify the theoretical results. Finally, conclusions are drawn in Section~\ref{sec:5}.


\section{Wind turbine model and its cone coefficient}\label{sec:2}
\noindent This section introduces the wind turbine model and the azimuth-dependent \emph{cone coefficient} \cite{Bottasso_2018} which are used to establish the wind speed estimation schemes. 
For a horizontal axis wind turbine, the \emph{azimuth-dependent cone coefficient} is defined as:
\begin{equation}
C_{\mathrm{m},i}(\lambda_{i}, q_{i}, \psi_i) := 
\frac{m_{i}(\lambda_{i}, q_{i}, \psi_i)}{\frac{1}{2} \rho AR U_{i}^2}
\, ,
\label{eq:cone coefficient}
\end{equation}
where $\rho$ is the air density, $A$ the rotor swept area, and $U_i$ represents the BEWS. 
The measured out-of-plane blade root bending moment (MOoP) is denoted by $m_{i}$, which is --~amongst other variables~-- a function of the azimuth angle $\psi_i = \psi + 2\pi(i-1)/3$, where $i=\left\{1,\,2,\,3\right\}$ is the blade index for a three-bladed wind turbine and $\psi$ the azimuth angle of the first blade. 
The cone coefficient is represented by $C_{\mathrm{m},i}$ and depends on the azimuth position of blade $i$.
Here, $C_{\mathrm{m},i}$ is a nonlinear function of the \emph{tip-speed ratio}
\begin{equation}
\lambda_i := \frac{\omega_\mathrm{r}R}{U_i}
\, ,
\label{eq:lambda}
\end{equation}
where $\omega_\mathrm{r}$ and $R$ are the rotor speed and the rotor radius, respectively. 
In Eq.~\eqref{eq:cone coefficient}, $q_i$ represents the \emph{blade effective dynamic pressure}
\begin{equation}
q_i := \frac{1}{2}\rho U_i^2
\, .
\label{eq:q}
\end{equation}
The shape of the $C_{\mathrm{m},i}$ surface is determined by the structural design of the wind turbine, and is typically derived via either high-fidelity numerical simulations or experimental tests.
Note that $C_{\mathrm{m},i}$ is also a function of the blade pitch angle~\cite{Liu2021}. However, without loss of generality, the pitch angle is assumed to be constant throughout this study.

The wind turbine model considered in this letter is the National Renewable Energy Laboratory (NREL)~5\nobreakdash-MW reference wind turbine~\cite{Jonkman_2009}. 
Its $C_{\mathrm{m},1}$ curve is shown in Fig.~\ref{Cm table} for illustration purposes, which is obtained from steady-state wind turbine simulations, in which the turbine is subjected to a sheared wind profile with a constant velocity of $8$\,m/s.

Once the cone coefficient is derived for each blade and over the operating conditions of interests, Eq.~\eqref{eq:cone coefficient} can be used to estimate the MOoP as:
\begin{equation}
\hat{m}_i(\lambda_{i}, q_{i}, \psi_i) = \frac{1}{2}\rho AR \hat{U}_{i}^2C_{\mathrm{m},i}(\lambda_{i}, q_{i}, \psi_i)
\, .
\label{eq:nonlinearity}
\end{equation}

\begin{figure}
\centering 
\includegraphics[width=1\columnwidth]{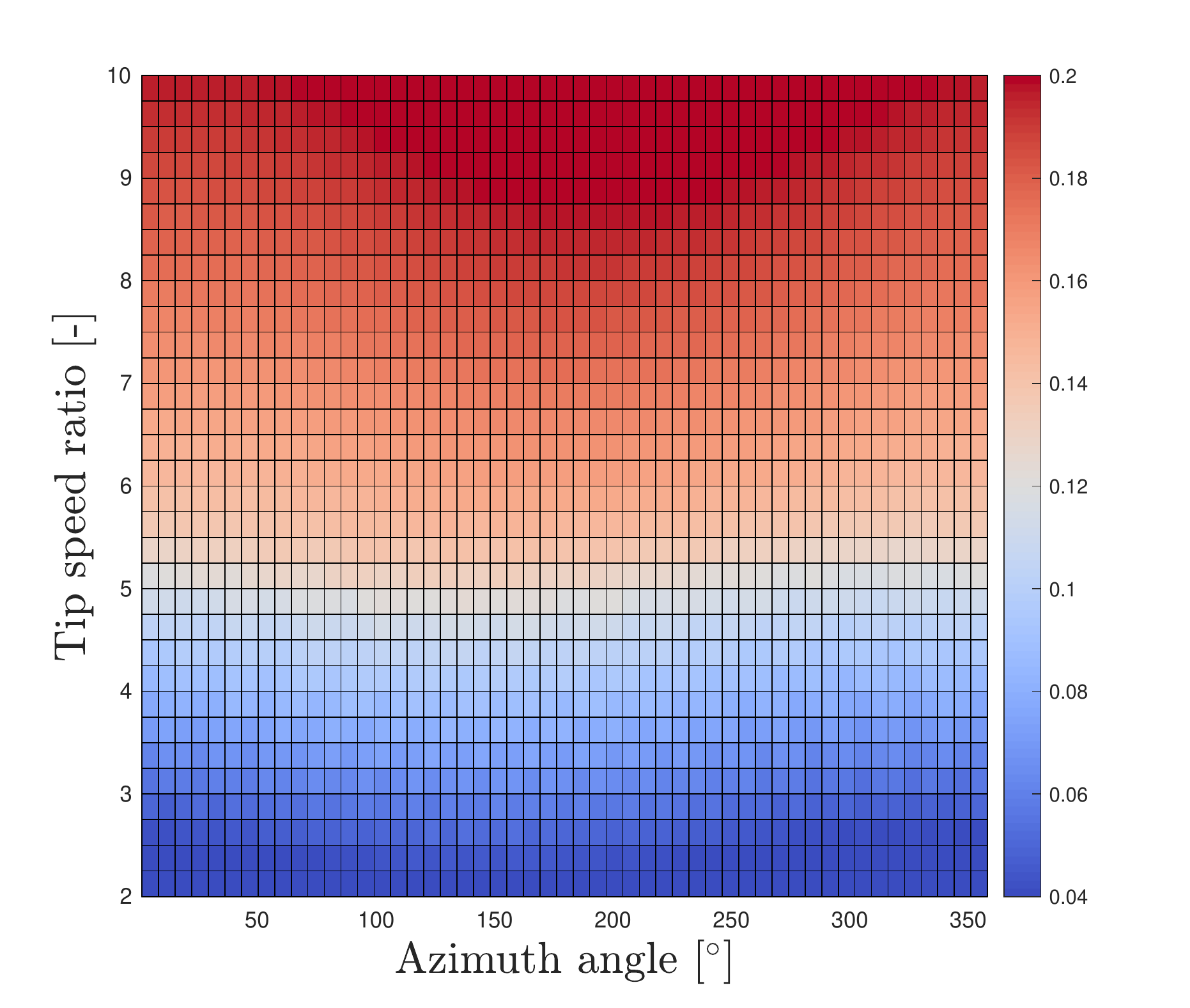}
\caption{Cone coefficient surface $C_{\mathrm{m},1}$ for blade 1 of the NREL~5-MW wind turbine model for a uniform constant wind speed of $8$\,m/s.}
\label{Cm table} %
\end{figure}
Based on the wind turbine model and its azimuth-dependent cone coefficient, the BEWS estimator problem is formalized in Section~\ref{sec:3}.





\section{Blade effective wind speed estimator}\label{sec:3}
\noindent First, the following assumptions are formulated for the wind speed estimation problem.
\begin{assumption}\label{ass:cp_smooth}
$C_{\mathrm{m},i}$ is a \textit{known} and smooth nonlinear function of class $\mathbb{C}^{n}$, i.e., its $0^{\mathrm{th}}$ through $n^{\mathrm{th}}$ derivatives are continuous, with $n\in\mathbb{Z}^{+}$.
\end{assumption}

\begin{assumption} \label{ass:constant_U}
The blade effective wind speed for an individual blade $U_i$ and for the rotor disk $\bar{U}$ are regarded as positive and \textit{unknown} signals.
\end{assumption}

\begin{assumption}\label{ass:measurable_Tg_wr}
The signals $m_i$ and $\omega_\mathrm{r}$ are assumed to be measured. The wind turbine operates at a constant rotor speed, i.e., $\dot{\omega}_\mathrm{r}(t)=0$.
\end{assumption}

The wind speed estimation problem solved in this letter is thus formulated as:

\begin{problem}\label{prob:alter_proof}
Given the nonlinear relation~\eqref{eq:nonlinearity} for estimation of the MOoP, find asymptotically convergent estimates $\hat U_i$ and $\hat{\bar{U}}$ for the wind speed such that:
\begin{equation}
\lim_{t\to\infty} \hat{U}_i(t) = U_i, \hspace{4mm} \lim_{t\to\infty} \hat{\bar{U}}(t) = \bar{U}
\, .
\label{eq:estimate}
\end{equation}
\end{problem}
\noindent The following subsections outline two wind speed estimation techniques. The last subsection concludes by performing a similarity analysis.

\subsection{Proportional integral notch estimator}
\noindent In our previous work~\cite{Liu_LCSS2021}, we revisited the widely-used I\&I estimator and extended the proportional-only structure by including an integral correction term.
Inspired by the extended I\&I estimator which returns an estimate of the REWS, this letter proposes a novel wind speed estimator called the Proportional Integral Notch (PIN) estimator to estimate the BEWS.
The frequency content of the wind speed experienced by each blade is composed of specific harmonics that are related to (integer multiples of) the rotor rotational frequency. This is due to, e.g., the wind shear and tower shadow. The BEWS can thus be estimated by amplifying those specific frequencies via a feedback control structure. 
Therefore, the periodic wind speed over an individual blade $i$ can be estimated as
\begin{equation}
\begin{cases}
\epsilon_i=\hat{m}_i\left(\omega_\mathrm{r},  \hat{U}_{i}, \hat{q}_i, \psi_i \right)-m_i,\\
\hat{U}_i(s) = K(s)\epsilon_i(s) = \left( k_\mathrm{p} K_\mathrm{N}(s) + k_\mathrm{i}/s \right) \epsilon_i,
\end{cases}
\label{PIN equation}
\end{equation}
where $\left\{ k_\mathrm{p},\, k_\mathrm{i}\right\} \in\mathbb{R}^{+}$ are the proportional and integral gains. 
The error between the estimated and measured MOoP is indicated by $\epsilon_i$.
The notch transfer function $K_\mathrm{N}(s)$ is scheduled by the rotor speed $\omega_\mathrm{r}$ and is defined as:
\begin{equation}
\label{eq:Notch-peak}
K_\mathrm{N}(s) = \frac{2 w_\mathrm{r} s}{s^2+w_\mathrm{r}^2}.
\end{equation}
As a result, the PIN estimator transfer matrix $C_{\text{PIN}}(s,\omega_\mathrm{r})$ is formulated as:
\begin{equation}
    \left[
    \begin{matrix}
       \hat{U}_1(s) \\
        \hat{U}_2(s) \\
        \hat{U}_3(s)
    \end{matrix}
    \right]
:=
\underbrace{\left[
    \begin{matrix}
       K(s) &  0 & 0 \\
       0 & K(s) & 0  \\
       0 & 0 & K(s)
    \end{matrix}
    \right]}_{C_{\text{PIN}}(s,\omega_\mathrm{r})}
\left[
    \begin{matrix}
       \epsilon_1(s)  \\
       \epsilon_2(s)  \\
       \epsilon_3(s)
    \end{matrix}
    \right].
 \label{PIN estimator}
\end{equation}



\begin{figure}
\centering 
\includegraphics[width=1\columnwidth]{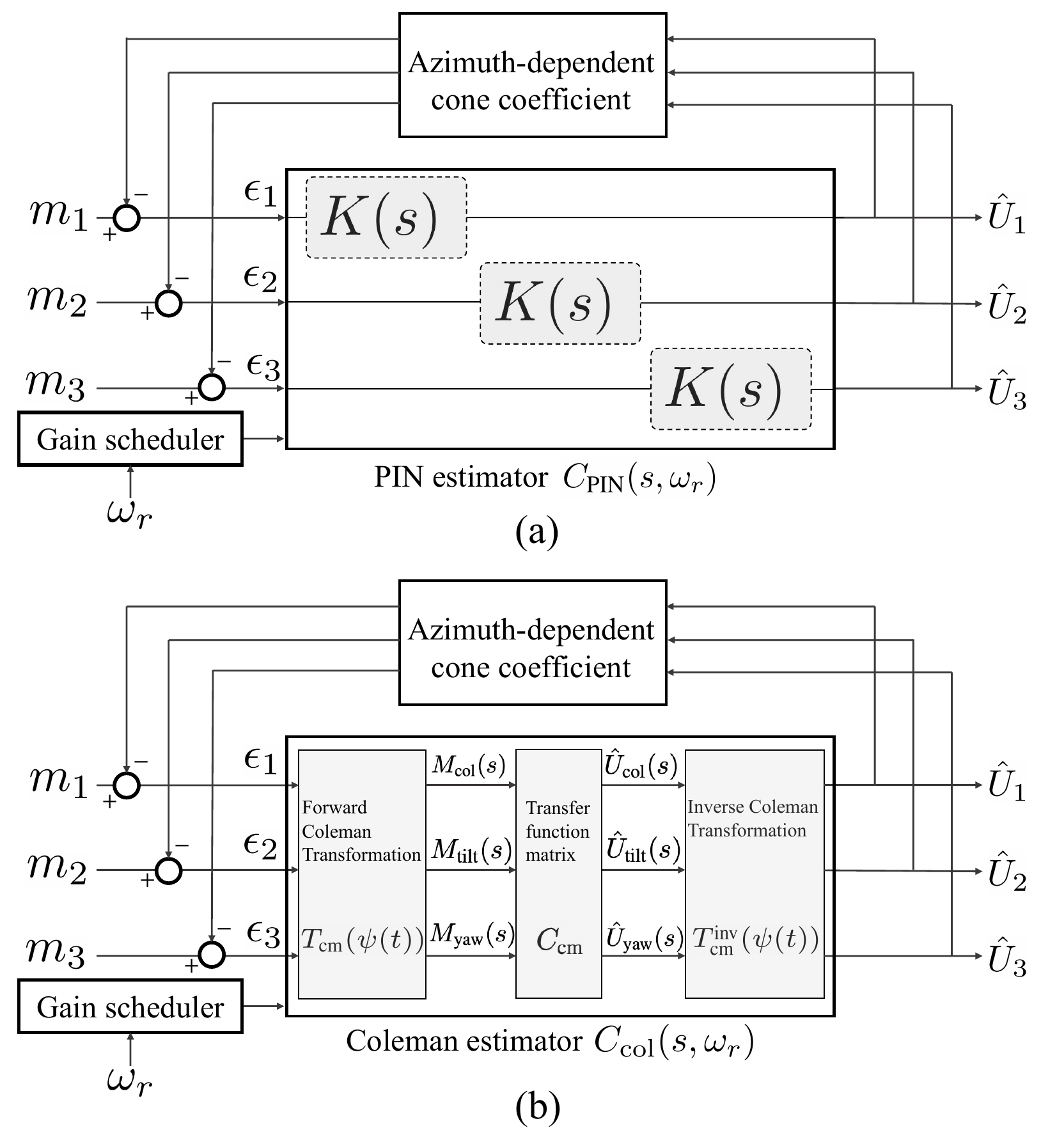}
\caption{Block diagram of the BEWS estimators formed by a negative feedback interconnection of a linear estimator gain and a nonlinear azimuth-dependent cone coefficient. (a) PIN estimator, (b) Coleman estimator.}
\label{block_diagram} %
\end{figure}

From Eq.~\eqref{PIN estimator}, it is evident that a diagonal and decoupled BEWS estimator $K(s)$ solely responds to its local blade load measurements which is depicted in Fig.~\ref{block_diagram}(a).
However, it is well known that there exists a significant amount of dynamic coupling between the blades of a turbine rotor~\cite{Mulders_2019}, that is, the coupling is present between the wind speed $\hat{U}_i(s)$ and the blade load $m_j(s)$ for $i \ne j$.
The PIN estimator $C_{\text{PIN}}(s,\omega_\mathrm{r})$ is diagonal and thus the off-diagonal terms are nonexistent, which implies that the proposed estimator is unable to take the coupling effects between blades into account.

The lack of information on the coupled blade dynamics will degrade the performance of the wind speed estimator, as will become apparent during the numerical simulations in Section~\ref{sec:4}. 
To cope with the problem of unmodeled dynamics, we propose a second estimation technique which employs the Coleman transformation in a similar estimation framework. This will be outlined in the next subsection.
Afterwards, the resemblances between and differences of both estimators are discussed.

\subsection{Coleman estimator}
\noindent The Coleman transformation~\cite{Bir_2008} is an important technique to convert the wind turbine dynamics to the non-rotating frame. 
By replacing the diagonal part of the PIN estimator with the forward and inverse Coleman transformations and by including a diagonal integral transfer matrix, the so-called Coleman estimator is formulated as shown in Fig.~\ref{block_diagram}(b).

The forward Coleman transformation used in the estimator is defined as follows:
\begin{multline}
\left[
    \begin{matrix}
        \epsilon_{\text{col}} \\
        \epsilon_{\text{tilt}} \\
        \epsilon_{\text{yaw}}
    \end{matrix}
    \right] 
    :=
    \underbrace{
    \frac{2}{3} 
    \left [
    \begin{matrix}
    {1}/{2} & {1}/{2} & {1}/{2} \\
    \sin{(\psi_1)} & \sin{\left(\psi_2\right)} & \sin{\left(\psi_3\right)} \\
    \cos{(\psi_1)} & \cos{\left(\psi_2\right)} & \cos{\left(\psi_3\right)}
    \end{matrix}
    \right ]}_{T_{\text{cm}}(\psi(t))}
    \left[
    \begin{matrix}
        \epsilon_{1} \\
        \epsilon_{2} \\
        \epsilon_{3}
    \end{matrix}
    \right],
    \label{Coleman transformation}
\end{multline}
where $\epsilon_{\text{tilt}}$ and $\epsilon_{\text{yaw}}$ are the error signals of the estimated and projected MOoP in tilt and yaw directions, respectively, while $\epsilon_{\text{col}}$ indicates that of the collective MOoP estimation.

With the transfer function matrix $C_{\text{cm}}$ these quantities are mapped to the estimated collective, tilt and yaw wind speed components. The matrix $C_{\text{cm}}$ is composed out of pure integrals in a diagonal structure to obtain the estimate of the wind speeds in the non-rotating frame, and is defined as:
\begin{equation}
    \left[
    \begin{matrix}
        \hat{U}_{\text{col}} \\
        \hat{U}_{\text{tilt}} \\
        \hat{U}_{\text{yaw}}
    \end{matrix}
    \right] 
    :=
    \underbrace{
    \left [
    \begin{matrix}
    {K_{\text{col}}}/{s} & 0 & 0 \\
    0 & {K_{0}}/{s} & 0 \\
    0 & 0 & {K_{0}}/{s}
    \end{matrix}
    \right ]}_{C_{\text{cm}}(s)} 
    \left[
    \begin{matrix}
        \epsilon_{\text{col}} \\
        \epsilon_{\text{tilt}} \\
        \epsilon_{\text{yaw}}
    \end{matrix}
    \right],
    \label{Ccm}
\end{equation}
where $K_{\text{col}}$ is the integral gain for estimating the collective component of the wind speed whereas the gain $K_0$ is used for both the tilt and yaw referred wind speeds for simplicity. Finally, the signals are projected back into the rotating frame to obtain BEWS with the inverse Coleman transformation
\begin{align}
\left[
    \begin{matrix}
        \hat{U}_{1} \\
        \hat{U}_{2} \\
        \hat{U}_{3}
    \end{matrix}
    \right] 
    &:=
    \underbrace{
    \left [
    \begin{matrix}
    1 & \sin{(\psi_1)} & \cos{(\psi_1)} \\
    1 & \sin{\left(\psi_2\right)} & \cos{\left(\psi_2\right)} \\
    1 & \sin{\left(\psi_3\right)} & \cos{\left(\psi_3\right)} 
    \end{matrix}
    \right ]}_{T_{\text{cm}}^{\text{inv}}(\psi(t))}
    \left[
    \begin{matrix}
        \hat{U}_{\text{col}} \\
        \hat{U}_{\text{tilt}} \\
        \hat{U}_{\text{yaw}}
    \end{matrix}
    \right] 
    \,.
    \label{inverse Coleman}
\end{align}



\subsection{Similarities between PIN and Coleman estimators}
\noindent Given these two seeming dissimilar BEWS estimators, it is compelling to attempt to understand their similarities. 
To achieve this goal, the transfer function matrix form of the Coleman estimator gain is first derived in Proposition~\ref{prop: transfer function matrix}.

\begin{prop}
\label{prop: transfer function matrix}
Let us consider the wind turbine operating at a constant rotor speed, that is, $\omega_\mathrm{r}(t)=\omega_0$.
The Coleman estimator gain $C_\mathrm{col}$ consisting of ~\eqref{Coleman transformation}, \eqref{Ccm} and \eqref{inverse Coleman} is equivalent to the following transfer function matrix:

\begin{equation}
C_\mathrm{col}(s,\omega_0) := 
\left[
    \begin{matrix}
       K_\mathrm{R,a}(s) &K_\mathrm{R,b}(s) &K_\mathrm{R,c}(s) \\
       K_\mathrm{R,c}(s) &K_\mathrm{R,a}(s) &K_\mathrm{R,b}(s) \\
       K_\mathrm{R,b}(s) &K_\mathrm{R,c}(s) &K_\mathrm{R,a}(s)
    \end{matrix}
    \right] 
    \,,
    \label{eq:Coleman transfer function matrix}
\end{equation}
where the subscript '$R$' the rotating reference frame, and the transfer functions $K_\mathrm{R,a}$, $K_\mathrm{R,b}$ and $K_\mathrm{R,c}$ are defined as:
\begin{align}
   K_\mathrm{R,a}(s) &:= \frac{(2 K_0 + K_\mathrm{col})s^2+ K_\mathrm{col}\omega_0^2}{3 s(s^2+\omega_0^2)}
   ,\nonumber\\
   K_\mathrm{R,b}(s) &:= \frac{(K_\mathrm{col}-K_0)s^2 + ( K_0 \sqrt{3}\omega_0 )s + K_\mathrm{col}\omega_0^2}{3 s(s^2+\omega_0^2)}
   ,\nonumber\\
   K_\mathrm{R,c}(s) &:= \frac{(K_\mathrm{col}-K_0)s^2 - ( K_0 \sqrt{3}\omega_0 )s + K_\mathrm{col}\omega_0^2}{3 s(s^2+\omega_0^2)}\nonumber.
\end{align}
\end{prop}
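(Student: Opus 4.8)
The plan is to treat the Coleman estimator gain as a cascade of a time‑varying pre‑multiplication, an LTI diagonal block, and a time‑varying post‑multiplication, i.e. $C_{\mathrm{col}} = T_{\mathrm{cm}}^{\mathrm{inv}}(\psi(t))\, C_{\mathrm{cm}}(s)\, T_{\mathrm{cm}}(\psi(t))$, and to show that, when $\omega_\mathrm{r}(t)=\omega_0$ is constant, this cascade collapses to an LTI transfer matrix. The engine of the argument is the Laplace modulation (frequency‑shift) property: with $\psi_i(t)=\omega_0 t + \phi_i$ and $\phi_i = 2\pi(i-1)/3$, one writes $\sin\psi_i$ and $\cos\psi_i$ in terms of $e^{\pm j\psi_i}$ and uses that multiplying a signal by $e^{\mp j\omega_0 t}$ shifts its Laplace argument as $s \mapsto s \pm j\omega_0$. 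First I would dispatch the collective channel: the first row of $T_{\mathrm{cm}}$ is $\tfrac13[1\ 1\ 1]$, the first column of $T_{\mathrm{cm}}^{\mathrm{inv}}$ is $[1\ 1\ 1]^{\top}$, and $C_{\mathrm{cm}}$ acts on this channel as $K_{\mathrm{col}}/s$, so this path contributes exactly $K_{\mathrm{col}}/(3s)$ to \emph{every} entry of $C_{\mathrm{col}}$ — already an azimuth‑independent (LTI) contribution.

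The substantive part is the tilt/yaw channel. Fixing an output index $i$ and an input index $j$, the contribution of $\epsilon_j$ to $\hat U_i$ through the tilt and yaw loops is $\sin\psi_i\, H[\sin\psi_j\,\epsilon_j] + \cos\psi_i\, H[\cos\psi_j\,\epsilon_j]$ with $H(s) = 2K_0/(3s)$, where $H$ is applied \emph{after} the sinusoidal modulation by $\psi_j$ and therefore cannot be commuted past it. Expanding both modulations into exponentials and using linearity of $H$, I would show that the two "doubled‑frequency" terms (those carrying a net modulation $e^{\pm j(\psi_i+\psi_j)}$, i.e. the $2\omega_0$ / "2P" content) cancel between the sine path and the cosine path, leaving only $\tfrac12\bigl(e^{j\psi_i}H[e^{-j\psi_j}\epsilon_j] + e^{-j\psi_i}H[e^{j\psi_j}\epsilon_j]\bigr)$. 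Because $\psi_i - \psi_j = \phi_i - \phi_j =: \theta_{ij} = 2\pi(i-j)/3$ is \emph{constant}, applying the shift property to each surviving term turns it into a genuine LTI transfer function acting on $\epsilon_j$, namely $\tfrac12\bigl[e^{j\theta_{ij}}H(s-j\omega_0) + e^{-j\theta_{ij}}H(s+j\omega_0)\bigr]$. Combining the two fractions over $s^2+\omega_0^2$ and using $e^{j\theta}+e^{-j\theta}=2\cos\theta$ and $e^{j\theta}-e^{-j\theta}=2j\sin\theta$ yields the tilt/yaw entry $\dfrac{2K_0\,(s\cos\theta_{ij}-\omega_0\sin\theta_{ij})}{3(s^2+\omega_0^2)}$.

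Finally I would add the two channels: the $(i,j)$ entry of $C_{\mathrm{col}}$ equals $\dfrac{K_{\mathrm{col}}}{3s} + \dfrac{2K_0(s\cos\theta_{ij}-\omega_0\sin\theta_{ij})}{3(s^2+\omega_0^2)}$; putting this over the common denominator $3s(s^2+\omega_0^2)$ and substituting $\theta_{ij}\in\{0,\,-2\pi/3,\,2\pi/3\}$ reproduces $K_{\mathrm{R,a}}$ (for $i=j$), $K_{\mathrm{R,b}}$ and $K_{\mathrm{R,c}}$ exactly, while the pattern $\theta_{ij}=2\pi(i-j)/3$ (mod $2\pi$) places them in the circulant arrangement displayed in \eqref{eq:Coleman transfer function matrix}. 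I expect the main obstacle to be the middle step: one must be scrupulous that the integrator $C_{\mathrm{cm}}(s)$ sits \emph{between} the two time‑varying transformations, so naive "pulling out" of the sinusoids is illegal, and the clean LTI result hinges entirely on the exact cancellation of the $2\omega_0$‑modulated terms when the sine and cosine paths are summed; everything after that is routine partial‑fraction bookkeeping. It is also worth stating explicitly that the claimed equivalence is an input–output (steady‑state transfer‑matrix) one, i.e. modulo the decaying homogeneous response associated with the transformation transients.
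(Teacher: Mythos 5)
Your proposal is correct and rests on exactly the same engine as the paper's proof: the Laplace frequency-shift (modulation) property applied to the complex-exponential decomposition of the Coleman matrices, with the key step being the cancellation of the $\pm 2\omega_0$-shifted ("2P") terms so that only the constant phase differences $\psi_i-\psi_j=2\pi(i-j)/3$ survive and the cascade collapses to the circulant LTI matrix. The paper packages this cancellation as the annihilation identities $\mathcal{C}_{-}^{\mathrm{T}}\mathcal{C}_{-}=\mathcal{C}_{+}^{\mathrm{T}}\mathcal{C}_{+}=\dots=0$ among its complex shift matrices, whereas you carry it out entry-wise via Euler's formula and then do the partial-fraction bookkeeping explicitly; this is a presentational difference only, and your closing remarks (the integrator cannot be commuted past the modulations; the equivalence is an input--output one modulo transients) are correct refinements rather than gaps.
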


\begin{proof}
The transfer function between the estimated wind speed in the non-rotating frame $\hat{U}_\text{col,tilt,yaw}(s)$ and $\hat{U}_{1,2,3}(s)$ in the rotating reference frame is given by
\begin{multline}
    \left[
    \begin{matrix}
        \hat{U}_1(s) \\
        \hat{U}_2(s) \\
        \hat{U}_3(s) 
    \end{matrix}
    \right]
    =
    \mathcal{C}_{-}^\mathrm{T}
    \left[
    \begin{matrix}
        \hat{U}_\text{col}(s-\mathrm{j}\omega_\mathrm{r}) \\
        \hat{U}_\text{tilt}(s-\mathrm{j}\omega_\mathrm{r}) \\
        \hat{U}_\text{yaw}(s-\mathrm{j}\omega_\mathrm{r}) 
    \end{matrix}
    \right]
    +
    \mathcal{C}_{+}^\mathrm{T}
    \\
    \left[
    \begin{matrix}
        \hat{U}_\text{col}(s+\mathrm{j}\omega_\mathrm{r}) \\
        \hat{U}_\text{tilt}(s+\mathrm{j}\omega_\mathrm{r}) \\
        \hat{U}_\text{yaw}(s+\mathrm{j}\omega_\mathrm{r}) 
    \end{matrix}
    \right]
    +
    \mathcal{C}_\text{col}^\mathrm{T}
    \left[
    \begin{matrix}
        \hat{U}_\text{col}(s) \\
        \hat{U}_\text{tilt}(s) \\
        \hat{U}_\text{yaw}(s) 
    \end{matrix}
    \right],
\label{q:thetaTY_to_theta123}
\end{multline}
where
\begin{multline*}
    \mathcal{C}_{-} = 
    \frac{1}{2}
    \left[
    \begin{matrix}
        0 & 0  & 0 \\
        0 & 1  & \mathrm{j} \\
        0 & -\mathrm{j} & 1
    \end{matrix}
    \right]
    \left[
    \begin{matrix}
        0 & 0 & 0 \\
        \cos{(0)} & \cos{(\frac{2\pi}{3})} & \cos{(\frac{4\pi}{3})} \\
        \sin{(0)} & \sin{(\frac{2\pi}{3})} & \sin{(\frac{4\pi}{3})}
    \end{matrix}
    \right],
\end{multline*}
\begin{multline*}
    \mathcal{C}_{+} = 
    \frac{1}{2} 
    \left[
    \begin{matrix}
        0 & 0  & 0 \\
        0 & 1  & -\mathrm{j} \\
        0 & \mathrm{j} & 1
    \end{matrix}
    \right]
    \left[
    \begin{matrix}
        0 & 0 & 0 \\
        \cos{(0)} & \cos{(\frac{2\pi}{3})} & \cos{(\frac{4\pi}{3})} \\
        \sin{(0)} & \sin{(\frac{2\pi}{3})} & \sin{(\frac{4\pi}{3})}
    \end{matrix}
    \right],
\end{multline*}
and
\begin{equation*}
    \mathcal{C}_\text{col} = 
    \left[
    \begin{matrix}
        1 & 1  & 1 \\
        0 & 0  & 0 \\
        0 & 0  & 0
    \end{matrix}
    \right],
\end{equation*}
with $\mathrm{j}=\sqrt{-1}$ being the imaginary unit. Given the diagonal integral form of Eq.~\eqref{Ccm}, it is straightforward to derive the frequency-shifted transfer function as:
\begin{equation}
\label{eq: tf1}
    \left[
    \begin{matrix}
        \hat{U}_\text{col}(s-\mathrm{j}\omega_\mathrm{r})  \\
        \hat{U}_\text{tilt}(s-\mathrm{j}\omega_\mathrm{r}) \\
        \hat{U}_\text{yaw}(s-\mathrm{j}\omega_\mathrm{r}) 
    \end{matrix}
    \right]
    =
    C_\text{cm}(s-\mathrm{j}\omega_\mathrm{r})
    \left[
    \begin{matrix}
        \epsilon_\text{col}(s-\mathrm{j}\omega_\mathrm{r})  \\
        \epsilon_\text{tilt}(s-\mathrm{j}\omega_\mathrm{r}) \\
        \epsilon_\text{yaw}(s-\mathrm{j}\omega_\mathrm{r}) 
    \end{matrix}
    \right],
\end{equation}
and 
\begin{equation}
\label{eq: tf2}
    \left[
    \begin{matrix}
        \hat{U}_\text{col}(s+\mathrm{j}\omega_\mathrm{r})  \\
        \hat{U}_\text{tilt}(s+\mathrm{j}\omega_\mathrm{r}) \\
        \hat{U}_\text{yaw}(s+\mathrm{j}\omega_\mathrm{r}) 
    \end{matrix}
    \right]
    =
    C_\text{cm}(s+\mathrm{j}\omega_\mathrm{r})
    \left[
    \begin{matrix}
        \epsilon_\text{col}(s+\mathrm{j}\omega_\mathrm{r})  \\
        \epsilon_\text{tilt}(s+\mathrm{j}\omega_\mathrm{r}) \\
        \epsilon_\text{yaw}(s+\mathrm{j}\omega_\mathrm{r}) 
    \end{matrix}
    \right].
\end{equation}

\noindent Combining Eqs.~\eqref{eq: tf1} and~\eqref{eq: tf2}, Eq.~\eqref{q:thetaTY_to_theta123} is rewritten as:
\begin{multline}
\label{eq:MTY_to_theta123}
    \left[
    \begin{matrix}
        \hat{U}_1(s) \\
        \hat{U}_2(s) \\
        \hat{U}_3(s) 
    \end{matrix}
    \right]
    =
    \mathcal{C}_{-}^\mathrm{T}
    C_\text{cm}(s-\mathrm{j}\omega_\mathrm{r})
    \left[
    \begin{matrix}
        \epsilon_\text{col}(s-\mathrm{j}\omega_\mathrm{r}) \\
        \epsilon_\text{tilt}(s-\mathrm{j}\omega_\mathrm{r}) \\
        \epsilon_\text{yaw}(s-\mathrm{j}\omega_\mathrm{r}) 
    \end{matrix}
    \right]\\
    +
    \mathcal{C}_{+}^\mathrm{T}
    C_\text{cm}(s+\mathrm{j}\omega_\mathrm{r})
    \left[
    \begin{matrix}
        \epsilon_\text{col}(s+\mathrm{j}\omega_\mathrm{r}) \\
        \epsilon_\text{tilt}(s+\mathrm{j}\omega_\mathrm{r}) \\
        \epsilon_\text{yaw}(s+\mathrm{j}\omega_\mathrm{r}) 
    \end{matrix}
    \right]
    +
    \mathcal{C}_\text{col}^\mathrm{T}
    C_\text{cm}(s)
    \left[
    \begin{matrix}
        \epsilon_\text{col}(s) \\
        \epsilon_\text{tilt}(s) \\
        \epsilon_\text{yaw}(s) 
    \end{matrix}
    \right].
\end{multline}

\noindent In addition, the Coleman transformed expressions of $\epsilon_{1,2,3}$ are formulated in the frequency domain as:
\begin{multline}
\label{eq:MTY_to_M123}
    \left[
    \begin{matrix}
        \epsilon_\text{col}(s) \\
        \epsilon_\text{tilt}(s) \\
        \epsilon_\text{yaw}(s)
    \end{matrix}
    \right]
    =
    \frac{2}{3}
    \mathcal{C}_{-}
    \left[
    \begin{matrix}
        \epsilon_1(s-\mathrm{j}\omega_\mathrm{r}) \\
        \epsilon_2(s-\mathrm{j}\omega_\mathrm{r}) \\
        \epsilon_3(s-\mathrm{j}\omega_\mathrm{r})
    \end{matrix}
    \right]
    +
    \\
    \frac{2}{3}
    \mathcal{C}_{+}
    \left[
    \begin{matrix}
        \epsilon_1(s+\mathrm{j}\omega_\mathrm{r}) \\
        \epsilon_2(s+\mathrm{j}\omega_\mathrm{r}) \\
        \epsilon_3(s+\mathrm{j}\omega_\mathrm{r})
    \end{matrix}
    \right]
    +
    \frac{1}{3}
    \mathcal{C}_\text{col}
    \left[
    \begin{matrix}
        \epsilon_1(s) \\
        \epsilon_2(s) \\
        \epsilon_3(s)
    \end{matrix}
    \right],
\end{multline}

\noindent Substituting Eq.~\eqref{eq:MTY_to_M123} into Eq.~\eqref{eq:MTY_to_theta123} yields
\begin{multline}
\label{eq:MTY_to_theta123_simple}
    \left[
    \begin{matrix}
        \hat{U}_1(s) \\
        \hat{U}_2(s) \\
        \hat{U}_3(s) 
    \end{matrix}
    \right]
    =
    \mathcal{C}_{-}^\mathrm{T}
    C_\text{cm}(s-\mathrm{j}\omega_\mathrm{r})
    \left(
        \frac{2}{3}
        \mathcal{C}_{-}
        \left[
        \begin{matrix}
            \epsilon_1(s-2\mathrm{j}\omega_\mathrm{r}) \\
            \epsilon_2(s-2\mathrm{j}\omega_\mathrm{r}) \\
            \epsilon_3(s-2\mathrm{j}\omega_\mathrm{r})
        \end{matrix}
        \right]\right.
        \\
        +
        \left.
        \frac{2}{3}
        \mathcal{C}_{+}
        \left[
        \begin{matrix}
            \epsilon_1(s) \\
            \epsilon_2(s) \\
            \epsilon_3(s)
        \end{matrix}
        \right]
        +
        \frac{1}{3}
        \mathcal{C}_\text{col}
        \left[
        \begin{matrix}
            \epsilon_1(s-\mathrm{j}\omega_\mathrm{r}) \\
            \epsilon_2(s-\mathrm{j}\omega_\mathrm{r}) \\
            \epsilon_3(s-\mathrm{j}\omega_\mathrm{r})
        \end{matrix}
        \right]
    \right)
    \\
    +
    \mathcal{C}_{+}^\mathrm{T}
    C_\text{cm}(s+\mathrm{j}\omega_\mathrm{r})
    \left(
        \frac{2}{3}
        \mathcal{C}_{-}
        \left[
        \begin{matrix}
            \epsilon_1(s) \\
            \epsilon_2(s) \\
            \epsilon_3(s)
        \end{matrix}
        \right]
        +
        \frac{2}{3}
        \mathcal{C}_{+}
        \left[
        \begin{matrix}
            \epsilon_1(s+2\mathrm{j}\omega_\mathrm{r}) \\
            \epsilon_2(s+2\mathrm{j}\omega_\mathrm{r}) \\
            \epsilon_3(s+2\mathrm{j}\omega_\mathrm{r})
        \end{matrix}
        \right]\right.
        \\
        +
        \left.
        \frac{1}{3}
        \mathcal{C}_\text{col}
        \left[
        \begin{matrix}
            \epsilon_1(s+\mathrm{j}\omega_\mathrm{r}) \\
            \epsilon_2(s+\mathrm{j}\omega_\mathrm{r}) \\
            \epsilon_3(s+\mathrm{j}\omega_\mathrm{r})
        \end{matrix}
        \right]
    \right)
    +
    \mathcal{C}_\text{col}^\mathrm{T}
    C_\text{cm}(s)
    \left(
        \frac{2}{3}
        \mathcal{C}_{-}
        \left[
        \begin{matrix}
            \hat{U}_1(s-\mathrm{j}\omega_\mathrm{r}) \\
            \hat{U}_2(s-\mathrm{j}\omega_\mathrm{r}) \\
            \hat{U}_3(s-\mathrm{j}\omega_\mathrm{r})
        \end{matrix}
        \right]\right.
        \\
        +
        \left.
        \frac{2}{3}
        \mathcal{C}_{+}
        \left[
        \begin{matrix}
            \epsilon_1(s+\mathrm{j}\omega_\mathrm{r}) \\
            \epsilon_2(s+\mathrm{j}\omega_\mathrm{r}) \\
            \epsilon_3(s+\mathrm{j}\omega_\mathrm{r})
        \end{matrix}
        \right]
        +
        \frac{1}{3}
        \mathcal{C}_\text{col}
        \left[
        \begin{matrix}
            \epsilon_1(s) \\
            \epsilon_2(s) \\
            \epsilon_3(s)
        \end{matrix}
        \right]
    \right),
    \end{multline}
which can be simplified by the remarkable property: 
\begin{align}
\mathcal{C}_{-}^\mathrm{T}\mathcal{C}_{-} = \mathcal{C}_{+}^\mathrm{T}\mathcal{C}_{+} = \mathcal{C}_{-}^\mathrm{T}\mathcal{C}_\text{col} = \mathcal{C}_{+}^\mathrm{T}\mathcal{C}_\text{col} = \mathcal{C}_\text{col}^\mathrm{T}\mathcal{C}_{-} = \mathcal{C}_\text{col}^\mathrm{T}\mathcal{C}_{+} = 0,\nonumber
\end{align}
and results in the following equation:
\begin{multline}
\label{eq:MTY_to_theta123_simplified}
    \left[
    \begin{matrix}
        \hat{U}_1(s) \\
        \hat{U}_2(s) \\
        \hat{U}_3(s) 
    \end{matrix}
    \right]
    =
    \left(
    \frac{2}{3}
    \mathcal{C}_{-}^\mathrm{T}
    C_\text{cm}(s-\mathrm{j}\omega_\mathrm{r})
    \mathcal{C}_{+}
    \right.
    \\
    \left.
    +
    \frac{2}{3}
    \mathcal{C}_{+}^\mathrm{T}
    C_\text{cm}(s+\mathrm{j}\omega_\mathrm{r})
    \mathcal{C}_{-}
    +
    \frac{1}{3}
    \mathcal{C}_\text{col}^\mathrm{T}
    C_\text{cm}(s)
    \mathcal{C}_\text{col}
    \right)
    \left[
    \begin{matrix}
        \epsilon_1(s) \\
        \epsilon_2(s) \\
        \epsilon_3(s)
    \end{matrix}
    \right],
\end{multline}
which results in the transfer function matrix form of the proposed Coleman estimator 
\label{eq:MBCestimator}
and completes the proof.
\end{proof}
Bode magnitude plots of the transfer function matrix $C_{\text{col}}(s,\omega_0)$ are illustrated in Fig.~\ref{MBCgain}, where $\omega_\mathrm{r}=2\pi f_r$ with $f_r = 0.2$~[Hz] refers to a constant $1$P frequency of the considered wind turbine model.
\begin{figure}
\centering 
\includegraphics[width=1\columnwidth]{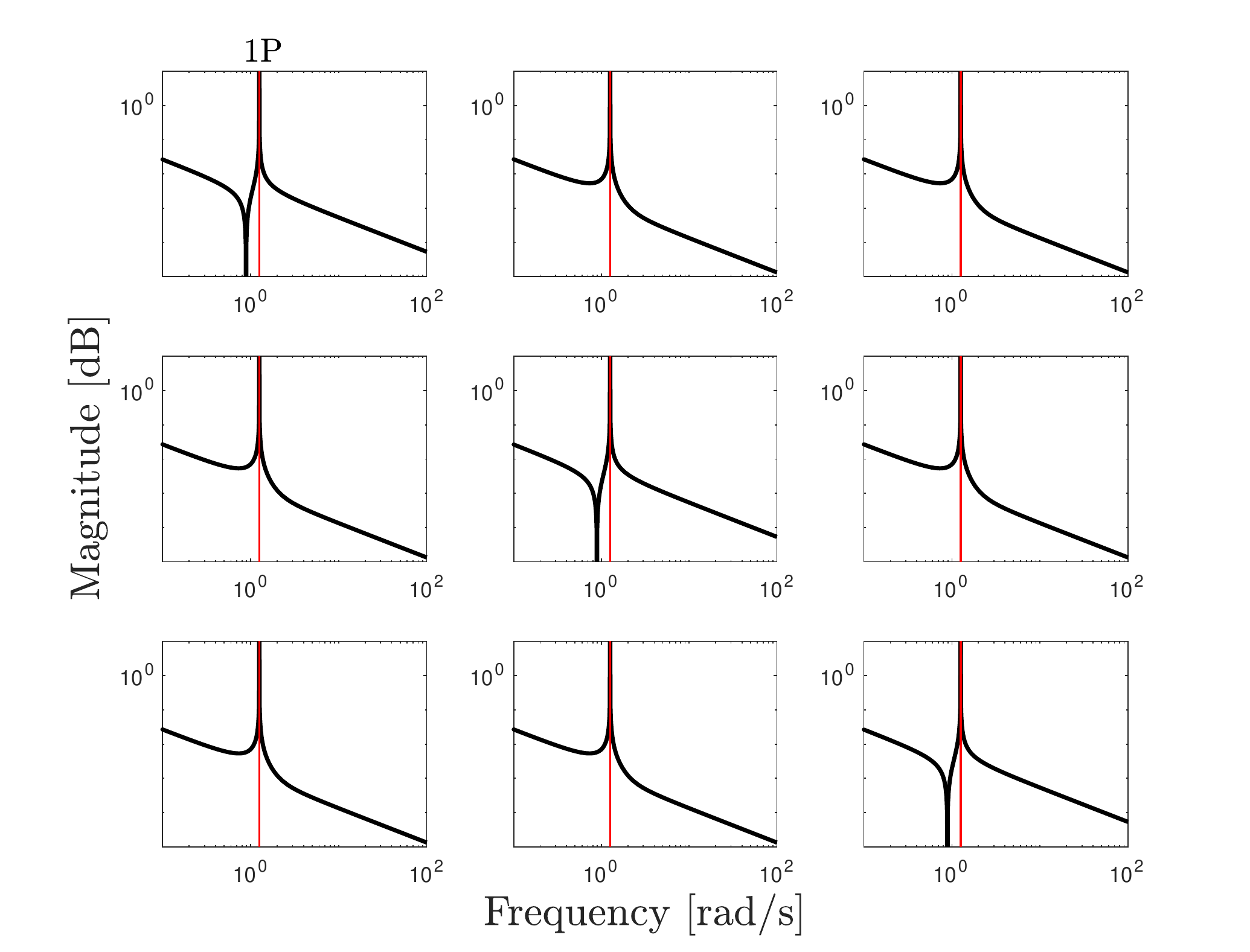}
\caption{Bode magnitude plots of the Coleman estimator defined by the transfer function matrix in Eq.~\eqref{eq:Coleman transfer function matrix}, where 1P frequency is assumed as constant, which is indicated by a red vertical line.}
\label{MBCgain} %
\end{figure}


The similarities between the PIN estimator (Eq.~\eqref{PIN estimator}) and the Coleman estimator (Eq.~\eqref{eq:Coleman transfer function matrix}) are now further illustrated in the frequency domain.
\begin{theorem}
Consider the wind turbine operating at a constant rotor speed $\omega_0$.
Given the nonlinear and azimuth-dependent cone coefficient of Eq.~\eqref{Cm table}, $C_\mathrm{PIN}(s,\omega_\mathrm{r})$ defined in Eq.~\eqref{PIN estimator} is equivalent to a diagonal structure of $C_\mathrm{col}(s)$ in Eq.~\eqref{eq:Coleman transfer function matrix} only including $K_\mathrm{R,a}(s)$, where
\begin{align}
k_\mathrm{i} &= \frac{K_\mathrm{col}}{3},\qquad
k_\mathrm{p} = \frac{K_0}{3\omega_0}.
\label{condition 1}
\end{align}
\label{theorm 1}
\end{theorem}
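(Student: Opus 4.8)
The plan is to reduce the asserted equivalence to a single scalar transfer-function identity. The starting observation is structural: in Fig.~\ref{block_diagram}, both the PIN estimator and the Coleman estimator are negative feedback interconnections of a \emph{linear} estimator gain with one and the same \emph{nonlinear}, azimuth-dependent cone-coefficient block~\eqref{eq:nonlinearity}. Consequently the two closed loops coincide if and only if their linear forward gains coincide, so it suffices to compare $C_\mathrm{PIN}(s,\omega_\mathrm{r})$ with the diagonal of $C_\mathrm{col}(s,\omega_0)$. Since $C_\mathrm{PIN}$ from~\eqref{PIN estimator} is already diagonal with the common scalar entry $K(s) = k_\mathrm{p} K_\mathrm{N}(s) + k_\mathrm{i}/s$, while the diagonal of $C_\mathrm{col}$ from Proposition~\ref{prop: transfer function matrix} carries the common entry $K_\mathrm{R,a}(s)$, the whole theorem collapses to proving the scalar identity $K(s) = K_\mathrm{R,a}(s)$ under the gain choice~\eqref{condition 1}.

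The next step invokes Assumption~\ref{ass:measurable_Tg_wr}: with $\dot{\omega}_\mathrm{r}(t) = 0$ the rotor speed equals the constant $\omega_0$, so the gain-scheduled notch~\eqref{eq:Notch-peak} is a linear time-invariant filter $K_\mathrm{N}(s) = 2\omega_0 s/(s^2 + \omega_0^2)$. I would then place $K(s)$ over the common denominator $s(s^2 + \omega_0^2)$, which gives the numerator $2\omega_0 k_\mathrm{p} s^2 + k_\mathrm{i}(s^2 + \omega_0^2)$, and substitute $k_\mathrm{p} = K_0/(3\omega_0)$ and $k_\mathrm{i} = K_\mathrm{col}/3$ from~\eqref{condition 1}. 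The $\omega_0$ in the denominator of $k_\mathrm{p}$ cancels the leading $2\omega_0$ of the notch numerator, and collecting the $s^2$ terms yields $\bigl[(2K_0 + K_\mathrm{col})s^2 + K_\mathrm{col}\omega_0^2\bigr]/\bigl(3 s(s^2 + \omega_0^2)\bigr)$, which is exactly $K_\mathrm{R,a}(s)$. Read in the other direction, matching the two rational functions coefficient by coefficient forces $k_\mathrm{i}\omega_0^2 = K_\mathrm{col}\omega_0^2/3$ from the constant term and $2\omega_0 k_\mathrm{p} + k_\mathrm{i} = (2K_0 + K_\mathrm{col})/3$ from the $s^2$ coefficient, which reproduces precisely~\eqref{condition 1}; this shows both that the stated gains work and that they are the only ones that do.

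There is no deep obstacle here: the argument is a direct algebraic verification. The two points that warrant care are (i) the up-front structural reduction, namely arguing that equality of the linear blocks is both necessary and sufficient for equivalence of the two estimators because the nonlinear feedback element is shared and left unchanged; and (ii) the bookkeeping of the common denominator $s(s^2 + \omega_0^2)$ together with the cancellation of the $\omega_0$ hidden inside $k_\mathrm{p}$ against the $2\omega_0$ factor of the notch numerator. Once these are handled, reading off~\eqref{condition 1} from the coefficient match completes the proof. As a consistency check one can note that the $K_\mathrm{R,a}$ channel in Fig.~\ref{MBCgain} is a pure integrator (the $1/s$ pole with residue $K_\mathrm{col}/3$) augmented by a resonant peak at $\omega_0$ of the same shape as $K_\mathrm{N}$, which is precisely the PIN structure of Fig.~\ref{block_diagram}(a).
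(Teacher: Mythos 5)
Your proposal is correct and follows essentially the same route as the paper: both reduce the claim to the scalar identity $K(s) = K_\mathrm{R,a}(s)$, rewrite $K(s) = k_\mathrm{p}K_\mathrm{N}(s) + k_\mathrm{i}/s$ over the common denominator $s(s^2+\omega_0^2)$ to get the numerator $(2k_\mathrm{p}\omega_0 + k_\mathrm{i})s^2 + k_\mathrm{i}\omega_0^2$, and verify the match under the gain substitution~\eqref{condition 1}. Your added remarks on the necessity of the gain choice (via coefficient matching) and the structural reduction to the linear blocks are sound refinements but do not change the argument.
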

\begin{proof}
To prove Theorem~\ref{theorm 1}, the equality $K(s) = K_\mathrm{R,a}(s)$ needs to hold under the conditions of Eq.~\eqref{condition 1}.
It is known from Eqs.~\eqref{PIN equation} and \eqref{eq:Notch-peak} that 
\begin{align}
K(s) &= k_\mathrm{p} \frac{2\omega_\mathrm{r} s}{s^2+\omega_\mathrm{r}^2} + \frac{k_\mathrm{i}}{s} \\
&= \frac{k_\mathrm{p} s}{s} \frac{2\omega_\mathrm{r} s}{s^2+\omega_\mathrm{r}^2}+\frac{k_\mathrm{i}}{s} \frac{s^2+\omega_\mathrm{r}^2}{s^2+\omega_\mathrm{r}^2} \\
&= \frac{(2k_\mathrm{p}\omega_\mathrm{r} +k_\mathrm{i})s^2+k_\mathrm{i} \omega_\mathrm{r}^2}{s(s^2+\omega_\mathrm{r}^2)}.
\label{proof of thereom 1}
\end{align}
Given condition~\eqref{condition 1}, it is evident that \eqref{proof of thereom 1} is equal to $K_\mathrm{R,a}$ in~\eqref{eq:Coleman transfer function matrix} which proves that the PIN estimator is equivalent to a particular diagonal structure of the Coleman estimator.
\end{proof}

\section{Case study}\label{sec:4}
\noindent This section performs a case study to verify the theoretical analysis and to demonstrate the similarities between PIN and Coleman estimators.

NREL's high-fidelity wind turbine simulation software package Fatigue, Aerodynamics, Structures, and Turbulence (FAST)~\cite{Jonkman-2005} is utilized to simulate the wind turbine dynamics.
The closed-loop control system including both wind speed estimator types is implemented in Simulink. The baseline \textit{K-omega-squared} torque control law~\cite{Jonkman-2005}, with a predefined optimal mode gain, is used in this letter. To evaluate the performance of the estimators, a sheared wind profile with a step-wise increasing wind speed from $8$~m/s to $10$~m/s is considered. The gains of the wind speed estimators are determined according to Theorem~\ref{theorm 1}, and the simulation time step is set to $0.01$~s.

\begin{figure}
\centering 
\includegraphics[width=1\columnwidth]{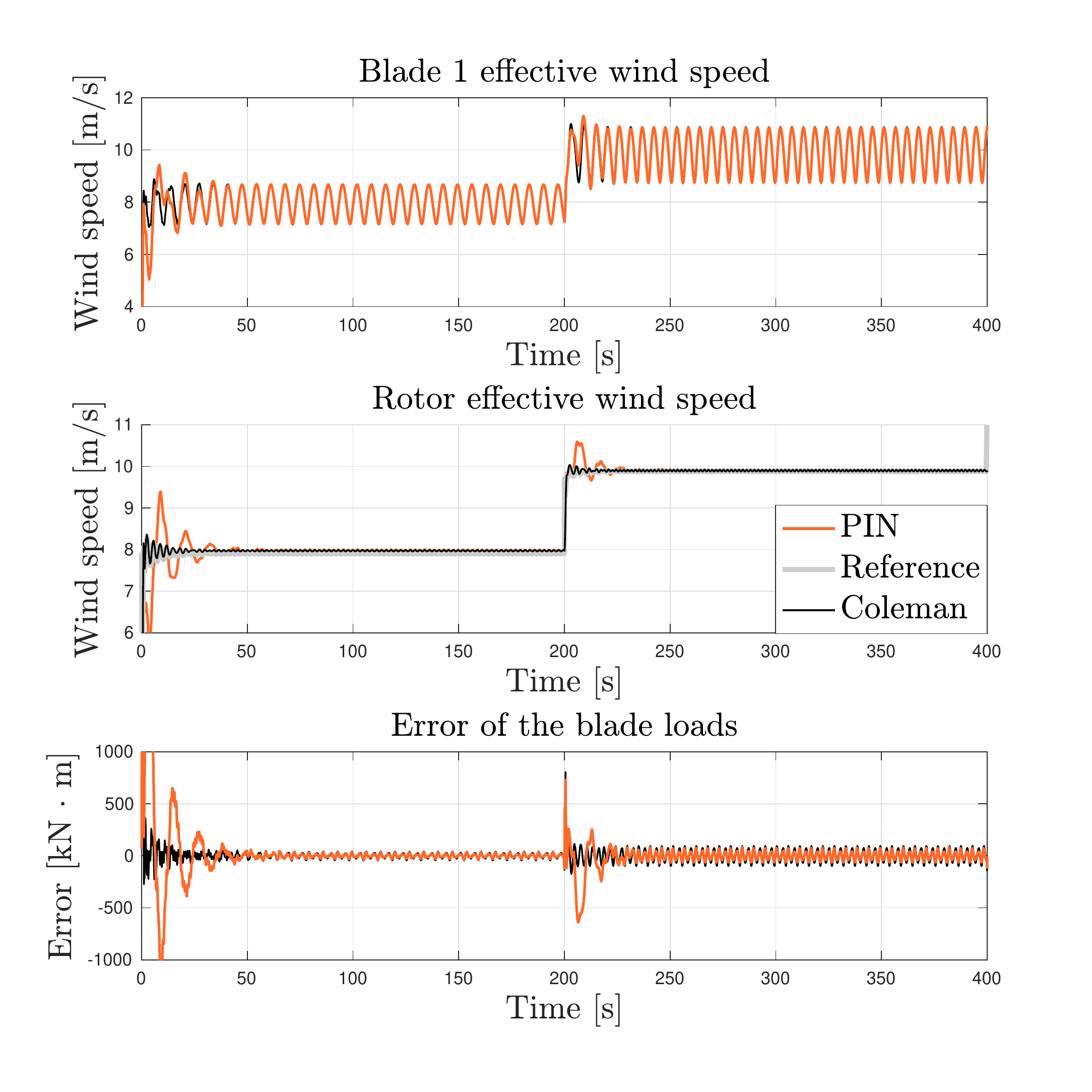}
\caption{Time history of the wind speed and the estimation error on blade 1 between the PIN and  Coleman estimators. Blades 2-3 show similar results and hence omitted.}
\label{Model compare} %
\end{figure}

The BEWS estimation results of the PIN and Coleman estimators are presented in Fig.~\ref{Model compare}. The proposed estimators show similar results for the considered step-wise sheared wind conditions, which substantiates Theorem~\ref{theorm 1}. The power spectra of the estimated wind speed calculated for $200$\,s\,--\,$400$\,s are presented in Fig.~\ref{power spectrum of wind}. The spectrum of the actual wind speed the blade experiences is also calculated and included as a reference and benchmark to evaluate estimator performance. 

The Coleman and PIN estimators show almost consistent performance for the dominant $1$P harmonic. For the other frequencies, the power spectrum of the Coleman estimator is closer to the reference value. This indicates that, compared to the PIN estimator, the Coleman estimator has a superior estimation quality for non-$1$P frequencies. In addition, it is evident that the Coleman estimator shows less oscillations during transients than the PIN estimator, which leads to a smoother and faster tracking of the actual wind speed.
Such a smoother transient response is explained by Eq.~\eqref{eq:Coleman transfer function matrix}, as the MIMO transmission zeros are cancelled in this full transfer function matrix form. 

In summary, both wind speed estimators show similar performance for the considered wind conditions, which verifies Theorem~\ref{theorm 1}. The PIN estimator, which is deemed as a particular diagonal form of the Coleman estimator, is a simple but effective method to estimate the BEWS. The Coleman estimator, on the other hand, is superior in considering the coupled rotor dynamics for the BEWS estimation. Since the effects of the transmission zeros are eliminated by its off-diagonal terms, the Coleman estimator exhibits a smoother transient response than the PIN estimator.

\begin{figure}
\centering 
\includegraphics[width=1\columnwidth]{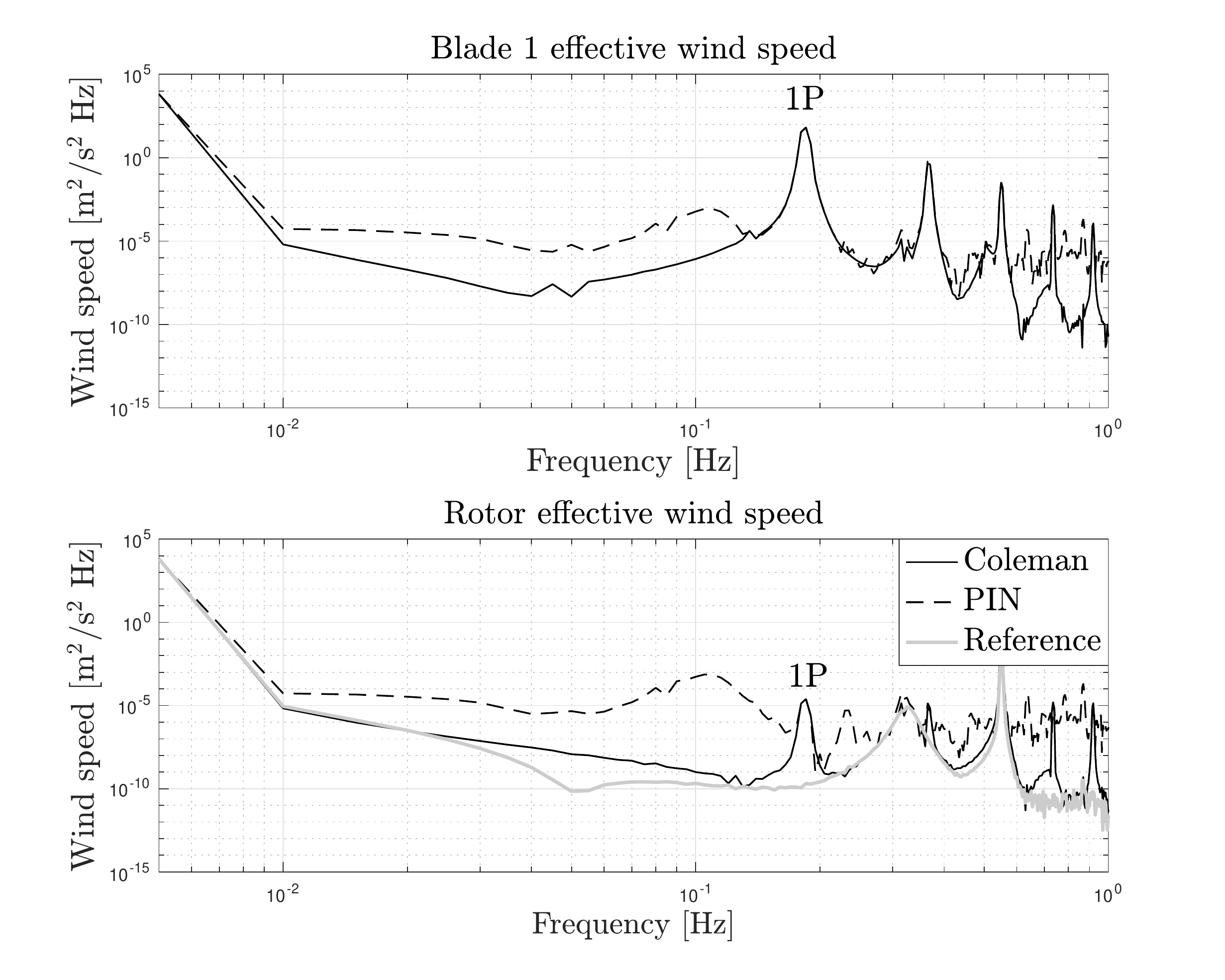}
\caption{Simulation results on the NREL 5MW wind turbine, where the power spectrum of the wind speed on blade 1 illustrates the performance similarities between the PIN and Coleman estimators. Blades 2-3 show similar results and hence omitted.}
\label{power spectrum of wind} %
\end{figure}

%
%

\section{Conclusions}\label{sec:5}
\noindent Two novel Blade Effective Wind Speed (BEWS) estimation schemes are proposed and an analysis is performed on the similarities of both implementations. First, a Proportional-Integral-Notch (PIN) estimator is introduced, which is formed by a negative feedback interconnection of a linear diagonally decoupled estimator and a nonlinear azimuth-dependent cone coefficient. Following the same philosophy, a Coleman estimator, which is based on the Coleman transformation, is developed to incorporate the coupled rotor dynamics into the estimation scheme.

Given the seeming disparities between these two estimators, it is proven that there exists structural and performance similarities between the PIN and Coleman estimators. Analytical and numerical results show the diagonal PIN and coupled Coleman estimators exhibit similar estimation results. In particular, the PIN estimator, which is considered as a diagonal form of the Coleman estimator, is a simple but effective method to estimate the BEWS. The Coleman estimator, on the other hand, is superior in taking into account the coupled rotor dynamics for the BEWS estimation. Since the effects of the MIMO transmission zeros present in the PIN scheme are eliminated by the off-diagonal terms of the Coleman estimator, the latter mentioned shows better transient behavior compared to the PIN estimator.

Future work will include the investigation of global convergence properties of the wind speed estimator, coupled with wind turbine controllers such as a tip-speed ratio tracker.


\bibliographystyle{IEEEtran} 
\bibliography{references}

\begin{thebibliography}{10}
\providecommand{\url}[1]{#1}
\csname url@samestyle\endcsname
\providecommand{\newblock}{\relax}
\providecommand{\bibinfo}[2]{#2}
\providecommand{\BIBentrySTDinterwordspacing}{\spaceskip=0pt\relax}
\providecommand{\BIBentryALTinterwordstretchfactor}{4}
\providecommand{\BIBentryALTinterwordspacing}{\spaceskip=\fontdimen2\font plus
\BIBentryALTinterwordstretchfactor\fontdimen3\font minus
  \fontdimen4\font\relax}
\providecommand{\BIBforeignlanguage}[2]{{%
\expandafter\ifx\csname l@#1\endcsname\relax
\typeout{** WARNING: IEEEtran.bst: No hyphenation pattern has been}%
\typeout{** loaded for the language `#1'. Using the pattern for}%
\typeout{** the default language instead.}%
\else
\language=\csname l@#1\endcsname
\fi
#2}}
\providecommand{\BIBdecl}{\relax}
\BIBdecl

\bibitem{GWEC_2021}
{Global Wind Energy Council}, ``Global wind report 2021,'' Global wind energy
  council, Report, 2021.

\bibitem{Soltani_2013}
M.~N. {Soltani}, T.~{Knudsen}, M.~{Svenstrup}, R.~{Wisniewski}, P.~{Brath},
  R.~{Ortega}, and K.~{Johnson}, ``Estimation of rotor effective wind speed: A
  comparison,'' \emph{IEEE Transactions on Control Systems Technology},
  vol.~21, no.~4, pp. 1155--1167, July 2013.

\bibitem{Ortega_2013}
R.~Ortega, F.~Mancilla-David, and F.~Jaramillo, ``A globally convergent wind
  speed estimator for wind turbine systems,'' \emph{International Journal of
  Adaptive Control and Signal Processing}, vol.~27, no.~5, pp. 413--425, 2013.

\bibitem{Liu_LCSS2021}
Y.~Liu, A.~K. Pamososuryo, R.~M.~G. Ferrari, and J.~W. van Wingerden, ``The
  immersion and invariance wind speed estimator revisited and new results,''
  \emph{IEEE Control Systems Letters}, vol.~6, pp. 361--366, 2022.

\bibitem{Bossanyi_2000}
E.~A. Bossanyi, ``The design of closed loop controllers for wind turbines,''
  \emph{Wind Energy}, vol.~3, no.~3, pp. 149--163, 2000.

\bibitem{Ostergaard2007}
K.~Z. {\O}stergaard, P.~Brath, and J.~Stoustrup, ``Estimation of effective wind
  speed,'' in \emph{Journal of Physics: Conference Series}, vol.~75,
  no.~1.\hskip 1em plus 0.5em minus 0.4em\relax The Science of Making Torque
  from Wind Lyngby, Denmark, 2007, p. 012082.

\bibitem{Bottasso_2018}
C.~Bottasso, S.~Cacciola, and J.~Schreiber, ``Local wind speed estimation, with
  application to wake impingement detection,'' \emph{Renewable Energy}, vol.
  116, pp. 155 -- 168, 2018.

\bibitem{Liu2021}
Y.~{Liu}, A.~{Kusumo Pamososuryo}, R.~{Ferrari}, T.~{Gybel Hovgaard}, and J.~W.
  {van Wingerden}, ``{Blade Effective Wind Speed Estimation: A Subspace
  Predictive Repetitive Estimator Approach},'' in \emph{2021 European Control
  Conference (ECC)}, 2021.

\bibitem{Bossanyi2003}
\BIBentryALTinterwordspacing
E.~A. Bossanyi, ``{Individual Blade Pitch Control for Load Reduction},''
  \emph{Wind Energy}, vol.~6, no.~2, pp. 119--128, apr 2003. [Online].
  Available: \url{http://doi.wiley.com/10.1002/we.76}
\BIBentrySTDinterwordspacing

\bibitem{Mulders_2019}
S.~P. Mulders, A.~K. Pamososuryo, G.~E. Disario, and J.~W. {van Wingerden},
  ``Analysis and optimal individual pitch control decoupling by inclusion of an
  azimuth offset in the multiblade coordinate transformation,'' \emph{Wind
  Energy}, vol.~22, no.~3, pp. 341--359, 2019.

\bibitem{Jonkman_2009}
J.~Jonkman, S.~Butterfield, W.~Musial, and G.~Scott, ``Definition of a 5{MW}
  reference wind turbine for offshore system development,'' National Renewable
  Energy Laboratory, NREL/TP-500-38060, 2009.

\bibitem{Bir_2008}
G.~Bir, ``Multi-blade coordinate transformation and its application to wind
  turbine analysis,'' in \emph{46th AIAA Aerospace Sciences Meeting and
  Exhibit}, 2008.

\bibitem{Jonkman-2005}
J.~M. Jonkman and M.~L. Buhl, ``Fast user's guide,'' \emph{SciTech Connect:
  FAST User's Guide}, 2005.

\end{thebibliography}


\end{document}